\newcommand*\otau{\overline{\tau}}
\newcommand*\msc{\mathrm{sc}}
\newcommand*\mtc{\mathrm{tc}}
\begin{document}


\makeatletter
\let\stashsmartcomma\sm@rtcomma
\let\sm@rtcomma,
\makeatletter

\title{Quantum Circuit Optimisation and MBQC Scheduling with a Pauli Tracking Library}
\author{Jannis Ruh}
\email[]{jannis.ruh1@uts.edu.au}
\orcid{0009-0004-9820-7348}
\affiliation{Centre for Quantum Software and Information, School of Computer Science,
Faculty of Engineering \& Information Technology, University of Technology Sydney, NSW
2007, Australia}
\author{Simon Devitt}
\email{simon.devitt@uts.edu.au}
\orcid{0000-0002-5901-1391}
\affiliation{Centre for Quantum Software and Information, School of Computer Science,
Faculty of Engineering \& Information Technology, University of Technology Sydney, NSW
2007, Australia}
\affiliation{InstituteQ, Aalto University, 02150 Espoo, Finland.}


\begin{abstract}
  We present a software library for the commutation of Pauli operators through quantum
  Clifford circuits, which is called Pauli tracking. Tracking Pauli operators allows one
  to reduce the number of Pauli gates that must be executed on quantum hardware. This is
  relevant for measurement-based quantum computing and  for error-corrected circuits that
  are implemented through Clifford circuits. Furthermore, we investigate the problem of
  qubit scheduling in measurement-based quantum computing and how Pauli tracking can be
  used to capture the constraints on the order of measurements.
\end{abstract}

\maketitle

\makeatletter
\let\sm@rtcomma\stashsmartcomma
\makeatother

\section{Introduction}\label{s"introduction}

Realising fault-tolerant quantum gates and qubits is costly in terms of space, time and
energy. It is crucial to reduce the number of gates as much as possible. This optimisation
can be tackled on multiple levels, e.g., optimising the quantum algorithm on the
logical circuit level, similar to how classical compilers optimise classical
algorithms, or trying to develop more efficient error correcting codes.

Another problem that appears in the context of \ac{mbqc}
\cite{raussendorf_one_way_quantum_computer,danos_measurement_calculus}, but partially also
in \ac{qec}, are dynamic corrections that are induced into the computation because of the
nondeterminism of the measurements.
For an efficient computation, these corrections have to be actively accounted for, and
cannot be ignored through post-selection.
They are also an essential piece of information for qubit scheduling,
i.e., re-using qubits after they have been measured, as it defines an order on the
measurements, which we shall explain in more detail.

In this paper, we shall focus on the {\em classical tracking} of Pauli operators through a
quantum circuit, called {\em Pauli tracking}
\cite{knill_quantum_computing_with_realistic_noisy_devices,paler_software_based_pauli_tracking,riesebos_pauli_frames,jin_ho_pauli_frames},
which deals with the second problem introduced above and (partially) with the first
problem. Pauli tracking, we shall explain it below, is an optimisation done on the
software level that directly reduces the number of Pauli gates that must be executed on
the quantum hardware. Furthermore, it can capture the information of dynamic Pauli
correction induced by measurements. Pauli tracking is applicable whenever the quantum
circuit comprised of Clifford gates (and arbitrary measurements), as for example, in
\ac{mbqc}, but also in the context of \ac{qec}, e.g., the surface code
\cite{bravyi_lattice_code,fowler_surface_code}. The \ac{qec} codes themselves are often
Clifford circuits on the level of the (uncorrected) physical qubit; but also on the
logical fault-tolerant level, the non-Clifford gates are usually realised via injection of
certain ``magic'' states or specific measurements with the help of additional ancilla
qubits that are entangled using only Clifford gates (cf. \cref{f"teleportation})
\cite{gottesmann_quantum_teleportation_is_universal,zhou_methodology_for_quantum_logic_gate_construction,fowler_surface_code}.
Therefore, Pauli tracking can be applied on both levels in \ac{qec}, i.e., on the physical
level and the logical level, which gives it together with \ac{mbqc} a wide range of
applications.

Let us now sketch out what we mean with Pauli tracking and how it works; it is based on
the mathematical foundations discussed in
\cite{bolt_clifford_stuff_1,bolt_clifford_stuff_2,bolt_clifford_stuff_3}. The Clifford
group is the normaliser of the Pauli group, meaning that conjugating the Pauli group with
Clifford operators preserves the Pauli group. This implies that Pauli operators can be
classically efficiently, i.e., without exponential costs, commuted with Clifford
operators. A central implication of that characteristic is the Gottesman-Knill theorem
\cite{gottesman_heisenberg_representation_of_quantum_computers}, which states that
circuits consisting only of Clifford gates can be efficiently simulated with a classical
computer via a stabiliser simulator, e.g., \cite{gidney_stim}. For universal quantum
computation, however, the Clifford group is not sufficient, i.e., we need additional
non-Clifford gates. In this case, stabiliser simulators are not efficient anymore,
however, for certain realisations of quantum computers, we can still make use of the fact
that the Pauli group is preserved under conjugation of Clifford operators. For example, in
\ac{mbqc} or in the context of \ac{qec}, where the circuits only consist of Clifford gates
and measurements, it is possible to traverse the Pauli operators through the circuit by
commuting them with the Clifford gates until a measurement is reached and then account for
the Pauli operators through post-processing or adaption of the of the measurement (cf.
\cref{f"teleportation} (c)). This effectively reduces the number of Pauli gates that have
to be executed on the quantum hardware to the {\em number of qubits or less}.

In \cref{s"pauli_tracking_library}, we present a {\em library} that can be used to perform
the Pauli tracking \cite{pauli_tracker_software}. It is a low-level library, natively
written in Rust, with a Python wrapper and a partial C interface. The library is designed
to be dynamically used when compiling, or executing quantum circuits and supports various
generic data structures for different use cases. See \cref{s"software_availability} for
how to access the library.

Now when teleporting non-Clifford gates in the context of \ac{qec}, or in general any gate
as in \ac{mbqc}, the non-determinism of the according measurements usually introduces
Pauli corrections (or in general Clifford corrections) conditioned on the measurement
outcomes, as for example, depicted in \cref{f"teleportation}. These corrections
effectively define a strict partial time order for the execution of the circuit, because
of their non-determinism in general. The tracking of the Pauli corrections through the
circuit directly captures this order (except for the order induced by possible Clifford
corrections, however, they can also be deferred with an additional teleportation; cf.
\cref{f"teleportation} (c)) and reduces it to the measurements. Analysing this
information, in connection with entanglement structure of the circuit or graph state,
gives knowledge about how the computation can be optimised in space and in time, i.e.,
what is the minimal number of required qubits (fully space optimised), what is the minimal
number of steps of parallel measurements (fully time optimised), or in general, given a
fixed number of qubits, how many steps of parallel measurements are required. Furthermore,
capturing this time order and the corrections is necessary for certain \ac{mbqc}
compilation strategies and frameworks, e.g., Refs.
\cite{vijayan_compilation_of_algorithm_specific_graph_states,le_benchq}, where a logical
circuit is transformed into a logical graph state (using teleportation techniques and
stabilizer simulations) and the captured time order defines how the graph state is
consumed.

In \cref{s"mbqc_scheduling}, we shall focus on this problem. More specifically, we define
the {\em scheduling problem} for \ac{mbqc}, which may also be applied in the context of
\ac{qec} circuits locally at the parts where non-Clifford gates are teleported, and tackle
it detached from the underlying quantum circuit on the logical level (i.e., we do not
consider the quantum hardware). The problem will be defined in a way that is applicable to
\ac{mbqc} protocols like the Raussendorf cluster
\cite{raussendorf_one_way_quantum_computer}, but also other protocols that involve more
complex graph states, e.g., Refs.
\cite{vijayan_compilation_of_algorithm_specific_graph_states,le_benchq}, or in general,
whenever the entanglement structure of the circuit can be described with a graph and
teleportation techniques are used for non-Clifford gates. However, note that the algorithm
for the space-time optimisation that we provide is only scalable to a certain degree due
to the hardness of the problem (the problem is related to finding the pathwidth of a graph
\cite{elman_scheduling_via_path_decompositions}). This can be resolved if the input can be
split into smaller problems (e.g., one might be able to serially split the circuit
\cite{le_benchq}), but if the goal is to optimise larger inputs, more optimised algorithms
may be required. However, as a preview, note that finding a time-optimal solution is
independent of the underlying graph, and therefore not restricted to pure \ac{mbqc}, and
can be directly calculated from the Pauli tracking in polynomial time.
\vspace*{-0.2cm}
\begin{figure}[H]
  \center
  \includegraphics[scale=0.790]{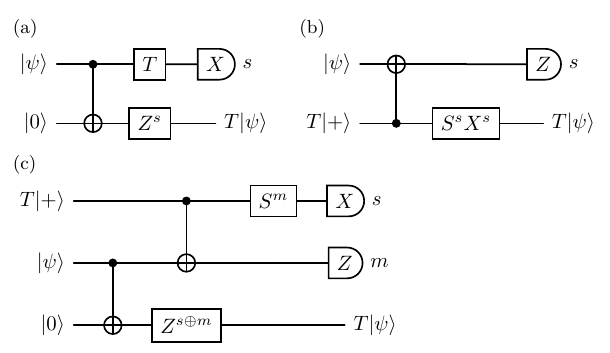}
  \caption[Example of $T$ gate teleportation]{
    Example of $T$ gate teleportation; these are typical protocols in
    \ac{mbqc} but also especially in \ac{qec} where the ancilla states, $T\ket{+}$, are
    prepared in a distillation process \cite{campbell_distillation_framework}. The input
    state $\ket{\psi}$ is teleported to the output state $T\ket{\psi}$. (a) The $T$
    teleportation is achieved with a ``magic'' measurement $TXT\ur$. This introduces a
    Pauli correction $Z$ on the output qubit depending on the measurement result. The
    correction can be tracked as a Pauli frame through a subsequent Clifford+measurement
    circuit, unblocking the execution until a measurement is reached for which the Pauli
    frame defines a non-trivial correction. (b) Implementation of the $T$ teleportation
    through injection of a ``magic'' state $T\ket{+}$. This protocol introduces a
    potential $S$ (phase gate) correction on the output qubit. Since this correction
    cannot be tracked through the subsequent circuit, it is blocking, i.e., the execution
    on the output qubit has to wait until the measurement result $s$ is known. (c)
    Implementing the $T$ teleportation with a magic state injection, however, without
    blocking the execution. This circuit is constructed by taking circuit (a) and then
    implement the $T$ gate there with circuit (b). This way the blocking $S$ correction is
    removed from the output qubit. The $Z^m$ correction on the output qubit can be
    obtained through Pauli tracking: \textit{commute} the $X$ correction in circuit (b)
    through the $S$ correction, turning it into a $ZX \propto Y$ correction; the $X$ part
    does not matter for the following $X$ measurement and can be completely
    \textit{removed}; the $Z$ part, however, flips the measurement result $s$, which can
    be accounted for by \textit{moving} the $Z$ correction onto the output qubit as a $Z$
    correction (since $s$ induces a $Z$ correction). All of these three operations,
    \textit{commute}, \textit{remove}, and \textit{move}, are supported in our Pauli
    tracking library.
  }
  \label{f"teleportation}
\end{figure}
\vspace*{-1.2cm}

\section{The Pauli Tracking Library}\label{s"pauli_tracking_library}

In \Cref{s"introduction}, we already sketched out the basic idea. Here, we shall focus on
a mathematical description which directly represents how the library is implemented.
Afterwards, we discuss some features of the library.

\subsection{Mathematical Formulation}\label{s"mathematical_formulation}

The Pauli operators, $X, Y, Z \in \U\w(\C^2)$, where $\U$ denotes the unitary group, are
defined as
\begin{equation*}
  X \coloneq \matp{0&1\\1&0} \quad
  Y \coloneq \matp{0&-i\\i&0} \quad
  Z \coloneq \matp{1&0\\0&-1}\;,
\end{equation*}
and we define the Pauli group as follows:
\begin{definition}[Pauli group]
Let \en. The Pauli group $\gp_n \leq \U\w(\C^{2^n})$ is defined by its generators via
\begin{equation*}
  \gp_n \coloneq \braket{i, X_1, Z_1, \ldots X_n, Z_n} \;.
\end{equation*}
For later reference, we also define the (Heisenberg-Weyl)
group $\heip_n \coloneq \braket{X_1, Z_1, \ldots X_n, Z_n}$.\\
The projective groups are $\op_n = \gp_n/\braket{i}$ and $\oheip[n] =
\heip_n/\braket{-1}$, respectively.
\end{definition}
One can also define the Pauli group differently, e.g., by including phases or arbitrary
complex pre-factors, but note that the respective projective groups, which are the groups
we are interested in, are all isomorphic, e.g., it is $\op_n \cong \oheip[n]$.

The Clifford group is now defined as the normaliser of the Pauli group:
\begin{definition}[Clifford group]\label{.clifford_group}
Let \en. The (unitary) Clifford group is defined by
\begin{align*}
  \gc_n \coloneq \set{U \in \U\w(\C^{2^n})}{U\gp_n U\ui = \gp_n} \;.
\end{align*}
The projective group is given by $\oc_n = \gc_n/\U(1)$.\\
Alternatively, the Clifford group can be defined through its generators, e.g.
\cite{selinger_normal_form_of_cliffords},
\begin{equation*}
  \gc_n = \groupset{u, H_i, S_i, \cz[ij]}{u \in \U(1);\; i, j \in \N_{\leq n}; \;
  i \neq j}\;,
\end{equation*}
where $H$ is the Hadamard operator $H
= \frac{1}{\sqrt{2}}\smatp{1&1\\1&-1}$, $S$ the phase operator $S = \diag\w(1, i)$ and
$\cz$ the controlled-$Z$ operator $\cz = \diag\w(1, 1, 1, -1)$.
\end{definition}
Again, we could have chosen slightly different definitions, but the resulting groups would
be either the same or at least the projective groups would be isomorphic\footnote{For
example, one could replace the unitary group $\U$ with the linear group $\GL$, which
would result in a Clifford group $\ic_n \leq \C^\times \U\w(\C^{2^n})$, i.e., more
specifically, the same group as before, but with arbitrary pre-factors $\C^\times = \C
{\setminus} \bc{0}$, cf. \cite{bolt_clifford_stuff_1,bolt_clifford_stuff_2}. Using
$\C^\times \gp_n$ as normalised group, leads to the same Clifford group, however, using
$\heip_n$ as normalised group would lead to different group, since it is not invariant
under Clifford conjugations.}.

Now let us assume we have two Pauli operators $p_1, p_2 \in \gp_n$ and a Clifford operator
$c \in \gc_n$ that form the circuit $p_1 c p_2$. Under Pauli tracking, we understand the
process of transforming this circuit into $p c$ for some $p \in \gp_n$, that is we
traverse all Pauli operators to the end of the circuit and collapse them into one Pauli
operator. It is clear that this is achieved by setting $p = p_1 c p_2 c\ui$, i.e.,
conjugating the second Pauli operator with the Clifford operator. However, we can reduce
the problem to a simpler form. Firstly, since we are dealing with quantum circuits, scalar
factors do not matter, and we can reduce the Pauli group to its projective group, i.e.,
set $p_1, p_2, p \in \op_n$. Secondly, we also only have to consider the projective
Clifford group, i.e., set $c \in \oc_n$, since scalar factors are cancelled when
conjugating. Moreover, since conjugating Pauli operators with Pauli operators only
introduces a phase (more specifically, the centraliser of $\op_n$ in $\oc_n$ is $\op_n$
\cite{bolt_clifford_stuff_2}), we can reduce the Clifford group to the quotient group
$\oc_n / \op_n$. This leads us to the following definition of the Pauli tracking problem:
\begin{definition}[Pauli tracking]\label{.pauli_tracking}
Let \en[m, n]. Given a sequence $\w(g_i)_{1 \leq i \leq m} \subseteq \op_n \cup \oc_n /
\op_n$, calculate $\w(p_i)_{1 \leq i \leq m} \subseteq \op_n$, which is recursively
defined by
\begin{equation}
  p_i = \bcdot{\begin{array}{ll}
    g_i p_{i-1} & \text{, if } g_i \in \op_n\\
    g_i p_{i-1} g_i^{-1} & \text{, if } g_i \in \oc_n / \op_n
  \end{array}}
\end{equation}
for $i \in \bc{1, \ldots, m}$, where $p_0 = \1$.\\
We refer to $\w(p_i)_{0 \leq i \leq m} \subseteq \op_n$ as the Pauli frame sequence that
is tracked through the circuit sequence $\w(g_i)_{1 \leq i \leq m}$.
\end{definition}
This is the computational task we want solve with our library, which is essentially
achieved by implementing the following two isomorphisms.
\begin{proposition}\label{.quotient_tableau_description}
Let \en. $\op_n$ is isomorphic to the abelian group $\overline{\heib[n]} =
\w(\Z_2^n \times \Z_2^n, +) \cong \Z_2^{2n}$ with the standard addition via
\begin{align}
  \otau: \w(\Z_2^n \times \Z_2^n) &\to \op_n, \quad (z, x) \mapsto
  \bigotimes_{j=1}^n \overline{Z}_j^{z_j} \overline{X}_j^{x_j}\;.
\end{align}
\end{proposition}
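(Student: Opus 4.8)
The plan is to show that the map $\otau$ is a well-defined group homomorphism and then that it is a bijection, which together give the claimed isomorphism. Since both groups have order $2^{2n}$ (the right side is manifestly $\Z_2^{2n}$, and $\op_n = \oheip[n]$ has $4^n$ elements because the $2n$ generators $\overline{X}_j, \overline{Z}_j$ are independent involutions that commute up to the sign we have quotiented out), it suffices to prove that $\otau$ is a surjective homomorphism, or alternatively an injective one; I would aim for the homomorphism property plus surjectivity, as surjectivity is essentially immediate from the fact that the $\overline{X}_j, \overline{Z}_j$ generate $\oheip[n] = \op_n$.

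**First I would** pin down that $\otau$ is well-defined: the expression $\bigotimes_j \overline{Z}_j^{z_j}\overline{X}_j^{x_j}$ only depends on $z_j, x_j \in \Z_2$ because $Z^2 = X^2 = \1$, and passing to the projective group kills the phase ambiguities, so there is no ordering issue within each tensor factor beyond a sign. **Next**, the crux: verify the homomorphism identity
\begin{equation*}
  \otau(z + z', x + x') = \otau(z,x)\,\otau(z',x')
\end{equation*}
in $\op_n$. Because distinct tensor factors commute exactly, this reduces to the single-qubit statement that $\overline{Z}^{\,a}\overline{X}^{\,b}\,\overline{Z}^{\,c}\overline{X}^{\,d} = \overline{Z}^{\,a+c}\overline{X}^{\,b+d}$ in $\op_1$, which follows from $ZX = -XZ$ (hence $\overline{Z}\,\overline{X} = \overline{X}\,\overline{Z}$ projectively) together with $Z^2 = X^2 = \1$; one just commutes $\overline{X}^{\,b}$ past $\overline{Z}^{\,c}$ at the cost of a sign that vanishes in the quotient. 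A clean way to organise this is to note $\overline{Z}_j, \overline{X}_j$ all pairwise commute in $\op_n$ (again, signs die), so $\op_n$ is abelian and the product of the single-qubit identities over $j$ gives the result.

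**Then**, for surjectivity, observe that by \cref{.pauli_tracking}'s ambient setup (or rather \cite{bolt_clifford_stuff_2}) $\op_n \cong \oheip[n] = \braket{\overline{X}_1, \overline{Z}_1, \ldots, \overline{X}_n, \overline{Z}_n}$, and every such generator is visibly in the image of $\otau$ (take a single unit coordinate), so the image is all of $\op_n$. Finally, $|\Z_2^{2n}| = 4^n = |\op_n|$, so the surjective homomorphism $\otau$ is an isomorphism; the identification $\Z_2^n \times \Z_2^n \cong \Z_2^{2n}$ is the trivial reindexing. **The main obstacle**, such as it is, is purely bookkeeping: being careful that all the phase and sign factors arising from $ZX = -XZ$ and from reordering tensor factors genuinely lie in $\braket{i}$ (resp.\ $\braket{-1}$) so that they vanish upon projectivisation — once that is granted, everything is forced by counting. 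I would also remark that one could instead prove injectivity directly by showing $\otau(z,x) = \overline 1$ forces every $z_j = x_j = 0$ (a nonzero vector yields a nontrivial Pauli string, which is never a scalar), bypassing the cardinality argument if a self-contained proof is preferred.
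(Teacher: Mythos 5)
Your proposal is correct, and it is worth noting that the paper itself offers no proof of this proposition at all: it is stated as a standard fact (the binary, or symplectic, representation of projective Pauli operators), with references cited only for the companion statement about the Clifford group. So your verification is necessarily a from-scratch argument rather than a variant of the paper's. The substance is fine: well-definedness from $X^2 = Z^2 = \1$, the homomorphism property reduced to the single-qubit identity with the sign from $ZX = -XZ$ dying in the quotient, and surjectivity because the $\overline{X}_j, \overline{Z}_j$ generate $\op_n$ (the generator $i$ of $\gp_n$ is killed by projectivisation). One caution on your preferred route: the counting step $\abs{\op_n} = 4^n$ (your ``independent involutions'' claim) is not independent input --- it already encodes the fact that distinct Pauli strings are never scalar multiples of one another, which is exactly the injectivity you are trying to deduce, so the surjectivity-plus-cardinality argument is mildly circular unless you prove that independence separately. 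The clean fix is the alternative you mention in your last sentence: show the kernel is trivial directly, e.g.\ by observing that $\otau\w(z, x)$ with $(z,x) \neq 0$ has at least one non-identity tensor factor and is therefore traceless, hence not a scalar multiple of $\1$ and not trivial in $\op_n$. Make that the primary argument (homomorphism, trivial kernel, surjectivity) and the proof is complete and self-contained.
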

\begin{proposition}[\cite{bolt_clifford_stuff_1, bolt_clifford_stuff_2}]
\label{.clifford_symplectic_isomorphism}
Let \en. The projective Clifford group, up to Pauli operators, is isomorphic to the
symplectic group, i.e.,
\begin{equation*}
  \kappa : \oc_n/\op_n \to \Sp_{2n}\w(\Z_2), \quad c\op_n \mapsto
  S_{c} = \otau\ui \circ \inn_c \circ \otau\;,
\end{equation*}
where $\Sp_{2n}\w(\Z_2)$ is the symplectic group of the $\Z_2^{2n}$ vector space with
respect to the standard symplectic form $\smatp{0&\1\\\1&0}$, and $\inn_c$ is the inner
automorphism induced by $c$, i.e., conjugation with $c \in \oc_n$.
\end{proposition}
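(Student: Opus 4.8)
The plan is to verify in turn that $\kappa$ is well defined, that it is a group homomorphism, and that it is bijective, with surjectivity being the only non-routine point. Throughout I write $\omega$ for the standard symplectic form $\smatp{0&\1\\\1&0}$ and use \cref{.quotient_tableau_description}, which says that $\otau$ identifies $\op_n$ with the $\Z_2$-vector space $\overline{\heib[n]} \cong \Z_2^{2n}$.

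\emph{Well-definedness.} For $c \in \oc_n$, \cref{.clifford_group} gives that $\inn_c$ restricts to an automorphism of $\op_n$, so $S_c = \otau\ui \circ \inn_c \circ \otau$ is a group automorphism of $\overline{\heib[n]}$; since any group automorphism of a $\Z_2$-vector space is automatically $\Z_2$-linear, $S_c \in \GL_{2n}\w(\Z_2)$. That $S_c$ preserves $\omega$ follows because $\omega$ is exactly the $\Z_2$-valued commutator pairing transported through $\otau$: for $v,w \in \Z_2^{2n}$ and any lifts $\tilde p,\tilde q \in \heip_n$ of $\otau(v),\otau(w)$, the elementary commutation rule for tensor products of $X$'s and $Z$'s gives $\tilde p\tilde q = (-1)^{\omega(v,w)}\,\tilde q\tilde p$, independently of the chosen lifts; conjugation by a lift of $c$ sends this commutator to the commutator of the conjugates and fixes the central scalar $\pm 1$, so $\omega(S_c v, S_c w) = \omega(v,w)$, i.e.\ $S_c \in \Sp_{2n}\w(\Z_2)$. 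Finally $\op_n \trianglelefteq \oc_n$ (again \cref{.clifford_group}), and if $c' = c\bar p$ with $\bar p \in \op_n$ then $\inn_{c'} = \inn_c \circ \inn_{\bar p}$, where $\inn_{\bar p}$ acts as the identity on $\op_n$ because $\op_n$ is abelian; hence $S_{c'} = S_c$ and $\kappa$ descends to a well-defined map on $\oc_n/\op_n$.

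\emph{Homomorphism and injectivity.} Since $\inn_{c_1 c_2} = \inn_{c_1}\circ\inn_{c_2}$ and conjugation by $\otau$ turns composition into composition, $S_{c_1 c_2} = S_{c_1}S_{c_2}$, so $\kappa$ is a group homomorphism. If $\kappa(c\op_n) = \1$ then $\inn_c$ is trivial on $\op_n$, i.e.\ $c$ lies in the centraliser of $\op_n$ in $\oc_n$; by the fact quoted in the main text that this centraliser equals $\op_n$ \cite{bolt_clifford_stuff_2}, we get $c \in \op_n$, so $\ker\kappa$ is trivial.

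\emph{Surjectivity.} This is the main obstacle. I would establish it by producing preimages for a generating set of $\Sp_{2n}\w(\Z_2)$. The cleanest option is to use the classical fact that $\Sp_{2n}\w(\Z_2)$ is generated by the symplectic transvections $t_v\colon x \mapsto x + \omega(x,v)\,v$, and to realise each $t_v$ as $S_c$ for the Clifford $c = \tfrac{1}{\sqrt2}\w(\1 + i\,\otau(v))$ (which is unitary and Clifford because a lift of $\otau(v)$ is Hermitian and squares to $\1$): a short computation shows $c$ fixes any Pauli commuting with $\otau(v)$ and sends $Q \mapsto i\,\otau(v)\,Q$ when $Q$ anticommutes with $\otau(v)$, which on the symplectic side is precisely $t_v$. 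An equally valid alternative is to compute the matrices $S_{H_i}$, $S_{S_i}$, $S_{\cz_{ij}}$ attached to the Clifford generators of \cref{.clifford_group} and check that they generate $\Sp_{2n}\w(\Z_2)$. Either way the image of the homomorphism $\kappa$ contains a generating set of $\Sp_{2n}\w(\Z_2)$, hence $\kappa$ is onto, and together with injectivity this yields the isomorphism. (Once the order identity $|\oc_n/\op_n| = |\Sp_{2n}\w(\Z_2)|$ is known, surjectivity would follow from injectivity by counting, but since that identity is itself typically derived from this very representation, the generator argument is the more self-contained route.)
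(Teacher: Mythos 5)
The paper itself does not prove \cref{.clifford_symplectic_isomorphism}; it simply cites \cite{bolt_clifford_stuff_1,bolt_clifford_stuff_2}, so your argument cannot coincide with a proof in the text---but as a self-contained proof it is correct. Your structure is the standard one: $S_c$ is an automorphism of $\Z_2^{2n}$ (linearity is automatic over $\Z_2$), it is symplectic because the form $\smatp{0&\1\\\1&0}$ is exactly the commutation pairing transported through $\otau$ and conjugation preserves (anti)commutation, well-definedness on the quotient follows since $\op_n$ is abelian and normal, injectivity follows from the centraliser fact $C_{\oc_n}(\op_n)=\op_n$ that the main text itself quotes from \cite{bolt_clifford_stuff_2}, and surjectivity follows by exhibiting preimages of symplectic transvections via $c=\tfrac{1}{\sqrt2}(\1+iP)$, using Dieudonn\'e-type generation of $\Sp_{2n}(\Z_2)$ by transvections. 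One small wording fix: in that last step the lift $P$ of $\otau(v)$ must be chosen as the Hermitian lift in $\gp_n$ (e.g.\ $Y=iZX$), not an arbitrary lift in $\heip_n$, since lifts such as $ZX$ are anti-Hermitian and square to $-\1$; with that choice your computation $cQc^{-1}=Q$ or $iPQ$ is exactly the transvection $t_v$, and the argument closes. Your alternative route via the matrices of $S_{H_i}$, $S_{S_i}$, $S_{\cz[ij]}$ would also work but silently relies on the (true, nontrivial) fact that these generate $\Sp_{2n}(\Z_2)$, so the transvection argument you favour is indeed the more self-contained option.
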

The first isomorphism $\otau$ in \cref{.quotient_tableau_description} describes how Pauli
operators are represented in our library, i.e., they are simply represented by a pair of
boolean values or bits and multiplication is done via the XOR operation. The second
isomorphism $\kappa$ in \cref{.clifford_symplectic_isomorphism} then says that the
according operation of a conjugation with a Clifford element $c\op_n$ in the binary Pauli
representation is given by the symplectic operator $\kappa\w(c\op_n)$. Reference
\cite{conjugation_rules} explicitly lists the symplectic operators for some Clifford
elements that are implemented in the library.

Note that the process of Pauli tracking is very similar to stabiliser simulations. The
group $\heib[n]$ is the Heisenberg group and $\heip_n$ is its isomorphic Weyl
representation
\cite{bolt_clifford_stuff_1,bolt_clifford_stuff_2,bolt_clifford_stuff_3,gross_hudsons_theorem_finite_quantum_system,kibler_heisenberg_pauli_weyl,tolar_cliffords_in_quantum_computing}.
In stabiliser simulations, these isomorphisms are adjusted to non-trivial stabiliser
subgroups of  $\gp_n$ and the updating of the stabilisers works similar to the Pauli
tracking, with the differences that signs have to be accounted for.

The matrix representation of the Clifford conjugations gives us an upper bound for the
computation cost of the Pauli tracking per gate, that is, the cost of one conjugation is
bounded by $\order[n^2]$. However, in reality, the cost per gate is usually much lower,
since the standard Clifford gates used in quantum circuits are usually local to a lower
number of qubits \en[m], $m \leq n$, and they can often be implemented in a more efficient
way than simple matrix multiplication. For example, a Hadamard gate is often just one
memory swap, or a controlled-$Z$ gate can be implement with two XOR operations. If we
bound $m$ by a constant, for example, $m = 2$, and the circuit consists of \en[l] gates,
and we track \en[k] Pauli frames simultaneously, the Pauli tracking can be performed in
$\order[lk]$ time with $\order[nk]$ memory. In the case of \ac{mbqc}, under the
assumption that each teleportation induces a Pauli correction, that is, a Pauli frame (cf.
\cref{f"teleportation}), it is roughly $l \propto k \propto n$, i.e, both costs, time and
space, are of the order $\order[n^2]$.

\subsection{Library Features}

In \cref{s"mathematical_formulation}, we focused on how the Pauli tracking logic is
implemented. In this section we shall give a brief overview of how the library can be
used. However, we shall not give examples or explicitly discuss the API; for that, please
refer to the documentation of the software packages (cf. \cref{s"software_availability}).

There are essentially two modes in which the library can be used: The first one is for
Pauli tracking when all gates are known, for example dynamically during execution of a
quantum circuit. In this mode, there is one Pauli frame (cf. \cref{.pauli_tracking}), that
is, one Pauli operator for each qubit, and this frame is updated accordingly to the
circuit instructions. The required memory for this is linear in the number of qubits.

The second mode is to perform the Pauli tracking when defining the quantum circuits, or
compiling them, that contain gate teleportations, e.g, as in \ac{mbqc}. In this mode, for
each potentially induced Pauli correction (it is usually non-deterministic since it
depends on a measurement result), one Pauli frame can be captured, and then tracked
through the circuit. These frames can then be used to determine the strict partial time
order of the measurements that we discussed in the introduction. During executing, the
Pauli corrections before the measurements can then be obtained from the frames. In this
mode the required memory is linear in the number of qubits and frames, i.e., induced
corrections. For pure \ac{mbqc} circuits, this means that the memory scales approximately
quadratically with the number of vertices in the graph state. The frames are stored in
major-qubit-minor-frame order. This way, the conjugations can be performed through
vectorised operations and simple memory swaps.

The way the Pauli tracking is performed from the user side, is to initialise a tracking
object, and then update it by calling the according methods that correspond to the circuit
instructions (cf. library documentation examples). This works, for example, similar to how
quantum circuits are constructed in some of the quantum computing libraries, e.g., Circ
\cite{circ} or Qiskit \cite{qiskit}, where one initialises a circuit object and then adds
gates to it.

When using the Rust native library, the user can choose among different data structures
for the representation of the Pauli frames. This is achieved by designing the library
generically (through static dispatch, i.e., monomorphization) in its core data structures.
This way, the user can choose the most appropriate data structure for the specific use
case. For example, while \acs{simd} bit-vectors (\acl{simd}) may allow faster executions
of the Clifford conjugations, normal bit-vectors can be more efficient if the user often
has to access the Pauli frames. The library directly supports some standard data
structures, but the user can also support their own data structure by providing the
required methods. For example, to support a data structure as top-level tracking object,
the data structure only has to provide methods that implement the $H$, $S$, and $\cz$
gates since every other Clifford gate can be constructed through default methods according
to \cref{.clifford_group}.

When using the Python wrapper, this generic flexibility is not completely given, but
provided to a restricted set of data structures.

Finally, we would like to briefly discuss where this library stands in relation to other
software tools that do provide similar functionality, for example, Stim
\cite{gidney_stim}. Stim is a stabiliser simulator which internally tracks Pauli operators
to simulate error models. A subproject of Stim also provides a graphical application to
visualise how Pauli errors propagate through a Clifford circuit. Our library does not try
to directly compete with these existing projects, w.r.t. their Pauli tracking
functionality. Instead we want to provide flexible, low level functionality, which focuses
solely on the Pauli tracking, that can be integrated easily into multiple other projects.
The goal is to provide an API that allows for simple integration with minimal overhead.

\section{MBQC scheduling}\label{s"mbqc_scheduling}

In \ac{mbqc} \cite{raussendorf_one_way_quantum_computer,danos_measurement_calculus}, the
quantum gates are realised through gate teleportation protocols, i.e., entanglement of the
qubits with additional qubits and then performing specific measurements which effectively
realise the gate. The entangled resource state is usually describe by a graph where the
vertices represent the qubits and the edges the entanglement. The graph usually contains a
large number of vertices, i.e., qubits, and one critical aspect of \ac{mbqc} is to reuse
qubits after they have been measured. More specifically, it is important to schedule
initialisation, entanglement and measurement of the qubits in way that reduces the quantum
memory requirement (space cost) and the execution time (time cost). This, of course,
depends heavily on the underlying hardware and its architecture, but even without
considering that, it is a hard problem.

The scheduling is also restricted by certain constraints. One of the constraints is
induced by the non-determinism of the measurements. Since the measurement outcome is not
known prior to execution, they introduce Pauli corrections depending on the measurement
outcome. If these corrections commute or anticommute with the subsequent measurement, they
can be accounted for by post-processing, however, this is not the case in general.
Therefore, the corrections define a time order for the measurements. For certain \ac{mbqc}
protocols, like the Raussendorf two-dimensional cluster
\cite{raussendorf_one_way_quantum_computer}, a correct scheduling of the measurements is
directly included in the protocol and graph state, however, this scheduling may not be
time optimal. Moreover, for other protocols, that for example involve certain graph
transformations, finding a time order that is not too restrictive may be decoupled from
the graph state. This is where the Pauli tracking can help. For example, consider the
protocol proposed in
\cite{vijayan_compilation_of_algorithm_specific_graph_states,le_benchq}: In this protocol,
a quantum circuit is first transformed into a larger Clifford circuit that includes gate
teleportations, and then a graph state is computed from the Clifford circuit. The graph
state is independent of the Pauli corrections, however, they are captured via Pauli
tracking during the transformation to the Clifford circuit. From the captured Pauli
frames, one can then calculate the measurement time order for the graph, simply by
checking for each frame on which qubits it induces potential corrections.

The other constraint on the scheduling is that a qubit can only be measured when all its
neighbours have been initialised and entangled with the qubit. This is because the
entanglement does not commute with the measurement.

In the following, we formulate a framework that describes the scheduling problem. We shall
then describe an attempt to solve this problem up to a certain scale and present some
numerical results that give indications about how much can be gained with this
optimisation \cite{mbqc_scheduling_software}.

\subsection{Framework}

We start by defining a valid measurement schedule that accounts for the constraints
described above. The graphs are undirected and simple, i.e., no self-loops or multiple
edges between the same two vertices are allowed.
\begin{definition}[Measurement schedule]\label{.measurement_schedule}
Let $\w(V, E, \prec)$ be a triple where $G = \w(V, E)$ is a graph (with vertices $V$ and
edges $E$) and $\prec$ is a strict
partial order on $V$. A measurement schedule is a sequence $S = \w(M_i, I_i)_{1\leq i \leq
n}$, where $M_i, I_i \subseteq V$ and \en, such that for all $1 \leq i \leq n$ (negative
indexed sets are empty)
\begin{enumerate}
\item\label[condition]{i"schedule_init} $M_i \cup \neigh\w(M_i) \subseteq I_i$,
\item\label[condition]{i"schedule_order} $\mathrm{PRE}\w(M_i) \subseteq \bigcup_{1\leq j < i} M_j$,
\item\label[condition]{i"schedule_all} $\dot\bigcup_{1\leq i \leq n} M_i = V$,
\item\label[condition]{i"schedule_keep} $I_{i-1} {\setminus} M_{i-1} \subseteq I_i \subseteq
  V {\setminus} \bigcup_{1\leq j < i} M_j$,
\end{enumerate}
where $\neigh\w(M_i) = \set{\neigh(x)}{x \in M_i}$ is the set of all neighbours of the vertices in $M_i$
and $\mathrm{PRE}\w(M_i) = \set{x \in V}{\exists y \in M_i : x \prec y}$ is the set of all
vertices that are smaller than a vertex in $M_i$.
\end{definition}
The first two \cref{i"schedule_init,i"schedule_order} describe the constraints under which
a vertex can be measured, that is, the vertex itself and all its neighbours have to be
initialised (and entangled), and all smaller vertices that induce possible corrections on the
to be measured vertices have to be measured before. The last two
\cref{i"schedule_all,i"schedule_keep} are just for soundness, i.e., all vertices have to be
measured and each vertex is only initialised once and kept initialised (and entangled)
until it is measured.

We can now define the space and time cost of a schedule, i.e., how much quantum memory is
required and how many steps of parallel measurements are executed:
\begin{definition}[Schedule cost]\label{.schedule_cost}
Let $\w(V, E, \prec)$ be a graph with a strict partial order and $S = \w(M_i, I_i)_{1\leq
i \leq n}$ a corresponding measurement schedule, \en. The space cost, $\msc$, and time
cost, $\mtc$, of the schedule are defined by
\begin{align*}
  \msc\w(S) &= \max_{1\leq i \leq n} \abs{I_i}\;,\\
  \mtc\w(S) &= n\;.
\end{align*}
\end{definition}
The optimisation problem, one is now faced with, is to find a schedule $S$ with small
space and time cost. For the special case that there is effectively no time order, i.e.,
there are no $\prec$ relations, it has been shown that finding a space-optimal schedule is
equivalent to finding a path decomposition of the graph that realises the pathwidth
\cite{elman_scheduling_via_path_decompositions}, implying that the pathwidth is a lower
bound for the space cost $\msc\w(S)$. Just approximating the pathwidth is already NP-hard
\cite{elman_scheduling_via_path_decompositions}.

Given a graph and a time order, there are many possible schedules, e.g., given a sequence
$\w(M_i)_i$ there are many possible $\w(I_i)_i$, such that $S = \w(M_i, I_i)_i$ is a
valid measurement schedule. However, we are only interested in \textit{minimal}
$\w(I_i)_i$ w.r.t. a \textit{measurement pattern} $\w(M_i)_i$:
\begin{definition}[Measurement pattern]\label{.measurement_pattern}
Let $\w(V, \prec)$ be a tuple of a set of vertices $V$ with a strict partial order $\prec$. A
measurement pattern is a sequence $P = \w(M_i)_{1\leq i \leq n}$, where $M_i \subseteq V$
and \en, such that for all $1 \leq i \leq n$
\begin{enumerate}
\item $\mathrm{PRE}\w(M_i) \subseteq \bigcup_{1\leq j < i} M_j$,
\item $\dot\bigcup_{1\leq i \leq n} M_i = V$.
\end{enumerate}
\end{definition}
In contrast to a \textit{measurement schedule}, a \textit{measurement pattern} depends
only on the time order but not on the spatial structure of the graph. However, we can
construct a unique \textit{minimal} schedule with respect to the pattern:
\begin{proposition}\label{.construction}
Let $\w(V, E, \prec)$ be a graph with a strict partial order and $P = \w(M_i)_{1\leq i
\leq n}$ a measurement pattern of $\w(V, \prec)$, \en. Then the schedule $S = \w(M_i,
I_i)_{1\leq i \leq n}$ with
\begin{equation}
  I_i = M_i \cup \neigh\w(M_i) \cup \w(I_{i-1} {\setminus} M_{i-1})
\end{equation}
for all $1 \leq i \leq n$ (negative indexed sets are empty) is space optimal w.r.t. $P$ in
the sense that the measurement sets $M_i$, $1 \leq i \leq n$, are the same for $P$ and
$S$, and the space cost is minimal.
\end{proposition}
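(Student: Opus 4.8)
The plan is to verify two things: that the sequence $S$ produced by the recursion is a valid measurement schedule in the sense of \cref{.measurement_schedule}, and that it is pointwise smallest among all valid schedules with measurement pattern $P$, i.e.\ $I_i\subseteq I_i'$ for every valid schedule $S'=(M_i,I_i')_{1\le i\le n}$ that shares the measurement sets $M_i$. The second fact immediately yields $\abs{I_i}\le\abs{I_i'}$ for every $i$, hence $\msc(S)=\max_i\abs{I_i}\le\max_i\abs{I_i'}=\msc(S')$, which is minimality of the space cost; the other half of ``space optimal w.r.t.\ $P$'' --- that $S$ and $P$ have the same measurement sets --- is immediate, since by construction the first components of $S$ are the $M_i$.

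I would prove the pointwise-containment statement first, as it does most of the work and also feeds back into the validity check. Fix a valid schedule $S'=(M_i,I_i')$ with measurement pattern $P$ and argue $I_i\subseteq I_i'$ by induction on $i$. For $i=1$ the recursion gives $I_1=M_1\cup\neigh(M_1)$, which lies in $I_1'$ by \cref{i"schedule_init} applied to $S'$. For the inductive step, $I_i=M_i\cup\neigh(M_i)\cup(I_{i-1}\setminus M_{i-1})$: the first two summands lie in $I_i'$ by \cref{i"schedule_init} for $S'$, and $I_{i-1}\setminus M_{i-1}\subseteq I_{i-1}'\setminus M_{i-1}\subseteq I_i'$ by the induction hypothesis together with the left inclusion of \cref{i"schedule_keep} for $S'$; hence $I_i\subseteq I_i'$. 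In particular $I_i\subseteq I_i'\subseteq V\setminus\bigcup_{j<i}M_j$, using the right inclusion of \cref{i"schedule_keep} for $S'$.

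Checking validity of $S$ is then quick. \Cref{i"schedule_init} and the left inclusion of \cref{i"schedule_keep} hold because $M_i\cup\neigh(M_i)$ and $I_{i-1}\setminus M_{i-1}$ are explicit summands in the defining formula for $I_i$; \cref{i"schedule_order,i"schedule_all} are inherited verbatim from $P$ being a measurement pattern of $(V,\prec)$; and the right inclusion of \cref{i"schedule_keep}, $I_i\subseteq V\setminus\bigcup_{j<i}M_j$, is exactly what the previous paragraph supplied. If one prefers an argument that does not route through a reference schedule $S'$, this inclusion also follows from a direct induction on $i$: the summand $M_i$ avoids $\bigcup_{j<i}M_j$ because $\dot\bigcup_j M_j$ is a disjoint union; the summand $I_{i-1}\setminus M_{i-1}$ avoids it because $I_{i-1}\subseteq V\setminus\bigcup_{j<i-1}M_j$ by induction, from which $M_{i-1}$ is removed as well; and disjointness of $\neigh(M_i)$ from $\bigcup_{j<i}M_j$ is the compatibility between the graph and the pattern, which the existence of any valid schedule with pattern $P$ forces via \cref{i"schedule_init,i"schedule_keep}.

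Putting these together, $S$ is a valid measurement schedule whose measurement sets are those of $P$ and whose space cost is at most $\msc(S')$ for every valid schedule $S'$ with pattern $P$, which is the assertion. The main obstacle I anticipate is care rather than depth: the two inductions must interlock without circularity --- in the first route, validity of $S$ rests on the existence of some valid competitor, not on $S$ itself --- and condition \cref{i"schedule_keep}, stating that each vertex is initialised once and, after being measured, never re-initialised, is the only place where the graph structure and the order $\prec$ genuinely interact, so it is the step to treat carefully; the rest is formal manipulation of the recursion $I_i=M_i\cup\neigh(M_i)\cup(I_{i-1}\setminus M_{i-1})$.
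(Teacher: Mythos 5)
Your argument follows essentially the same route as the paper's proof: the paper likewise constructs $S$, observes that the only non-trivial validity condition is the right inclusion of \cref{i"schedule_keep}, i.e.\ $I_i \subseteq V \setminus \bigcup_{j<i} M_j$, and obtains minimality of the space cost from the fact that $I_i$ is the pointwise smallest set satisfying \cref{i"schedule_init} and the left inclusion of \cref{i"schedule_keep}; your induction showing $I_i \subseteq I_i'$ for every valid competitor is that observation written out in full, and your handling of \cref{i"schedule_order,i"schedule_all} and of the left inclusion of \cref{i"schedule_keep} matches the paper's ``only non-trivial condition'' remark. The one place you genuinely add something is the right inclusion of \cref{i"schedule_keep}: you correctly isolate that the summand $\neigh\w(M_i)$ must be disjoint from $\bigcup_{j<i} M_j$, note that this does not follow from the pattern axioms alone (they never mention $E$), and derive it from the existence of \emph{some} valid schedule with pattern $P$ --- an assumption that also underlies your route (a). That hypothesis is not stated in the proposition, and the paper's proof simply asserts the inclusion ``follows easily by induction''; your caution is warranted, since for $V=\{a,b\}$, $E=\{ab\}$, empty order and the pattern $M_1=\{a\}$, $M_2=\{b\}$, the construction yields $a \in I_2$ while \cref{i"schedule_keep} forbids it, and indeed no valid schedule with this pattern exists, so at this step neither your direct induction nor the paper's goes through without such a compatibility assumption (or a reading of $\neigh$ restricted to not-yet-measured vertices). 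In short: your proof is correct under the implicit assumption that $P$ admits a valid schedule at all, and rather than introducing a gap of your own you have surfaced a hypothesis the paper's own proof leaves tacit; everything else coincides with the paper's argument.
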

\begin{proof}
First, we need to show that $S$ is a valid schedule. The only non-trivial condition is
$I_i \subseteq V {\setminus} \bigcup_{1\leq j < i} M_j$, for all $1 \leq i \leq n$, but this
follows easily by induction.

The relative space optimality now simply follows from the fact that $I_i$ is obviously the
minimum set that satisfies \cref{i"schedule_init,i"schedule_keep} for all $1 \leq i \leq
n$.
\end{proof}

\subsection{Algorithm}

In the following two sections, we investigate the usefulness of scheduling optimisations
on smaller scales, i.e., how much can the time and space costs be potentially reduced
depending on the graph structure and the time order. In particular, the goal is not to
provide an algorithm that can be used for large scale quantum computing. However, note
that finding the time-optimal schedule can be achieved in polynomial time, as we describe
at the end of this section. Since finding the space-optimal scheduling is a
computationally hard problem, real applications will usually not reach the lower bound,
however, the presented results give insights on how much improvement on the space costs by
scheduling optimisations may be expected.

Our algorithm chooses different measurement patterns $P$, i.e., a sequence of parallel
measurement steps that are allowed by the time order, and then creates a measurement
schedule $S$ according to \cref{.construction}. For this schedule it calculates then the
space and time costs. The hardness of the problem lies in the sheer number of possible
choices for $P$. Again, considering the extreme case that there is no time order, the
possible choices for $P$ are the ordered partitions of the set $V$, and the number of
those partitions is given by the ordered Bell number, which is approximately $\abs{V}! /
\w(2 (\log 2)^{\abs{V}+1})$, asymptotically in $\abs{V}$
\cite{bailey_number_of_weak_orderings}.

However, not all patterns have to be considered necessarily. In our algorithm, the
measurement patterns are dynamically created, sketched out in the following (for more
details, please refer to the source code; it is essentially a depth-first-search): Let us
assume we already calculated the space and time costs for some schedules. Now when
choosing a new pattern, we first choose a first measurement set $M_1$. For this set, we
construct $I_1$ and the calculate the space and time costs so far (i.e., $\abs{I_1}$ and
$1$, respectively). If these costs are worse than the costs for the best schedules we
found so far, we directly discard the pattern and try a new set $M'_1$. If the costs are
better, we continue and choose a second set $M_2$. Again, we calculate the costs so far
and either discard the second set or continue. We do this until we finally have a complete
pattern and then repeat the process for a new pattern (in reality, we are not actually
always starting with a first set $M_1$, but rather go a few steps back and then forward
again with different measurement sets). This technique
allows us to potentially skip many possible patterns, however, it highly depends on the
structure of the graph and the time order. The scaling is in general still expensive
though. We tried to calculate the complexity of this implementation, but failed to derive
a simple expression; \cref{f"runtime} shows the run-times for some calculations, however,
note that this is not a benchmark but only qualitative information.

The algorithm described so far, covers only an exact optimisation, i.e., finding the
minimum costs, however, often approximations are sufficient. We implement an approximative
version of the algorithm by putting a probabilistic condition on the acceptance of a next
measurement set for a pattern, that is, if we choose some set $M_i$ and the costs are
better than the best costs so far, the set is still only accepted with a certain
probability. The probability function depends on appropriate parameters and can be
specified by the user. This allows us to reduce the run-time, however, the final results
are not guaranteed to be optimal, but only approximate the minimum costs, since some
optimal patterns might be discarded (and depending on how aggressively measurement sets
are discarded it might still not be scalable).

To get a time-optimal measurement schedule, however, is trivial, given a time order. For
the next measurement set, one simply chooses all vertices that are allowed to be measured.
Given some Pauli frames, the Pauli tracking library provides a method to calculate the
according time order in $\order[\abs{V}^3]$ time, representing it as a reduced directed
acyclic graph. In this graph, the qubits are sorted into layers $\w(M_i)_i$ that describe
when they can be measured the earliest, i.e., the layers define the measurement pattern $P
= \w(M_i)_i$ for the trivial time-optimal schedule (cf. \cref{.measurement_pattern}).
Obtaining the Pauli frames via Pauli tracking from the \ac{mbqc} circuit is bounded by
$\order[\abs{V}^2]$, i.e., the schedule is obtained in $\order[\abs{V}^3]$ time during
compilation.

\subsection{Numerical Results}

\begin{figure*}[hbt!]
  \center
  \includegraphics[scale=0.97]{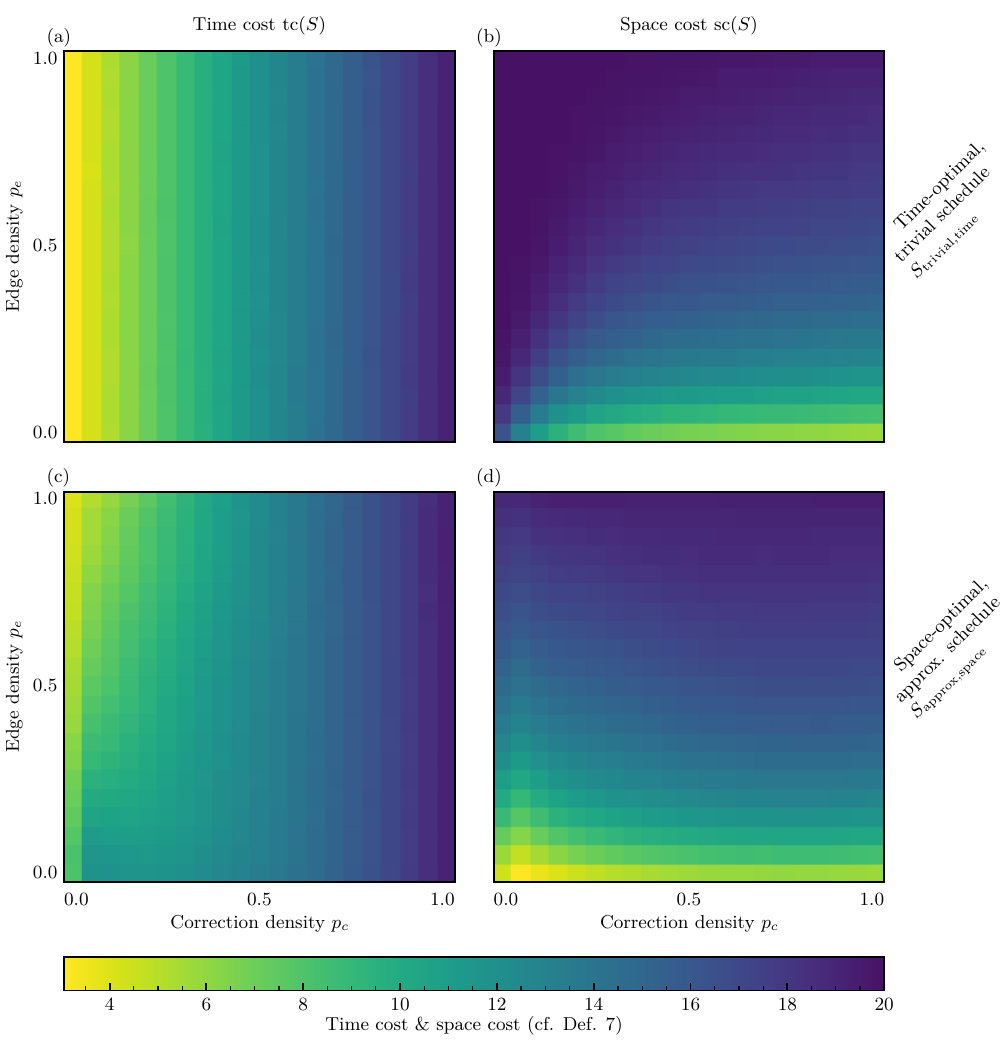}
  \caption[
    Time and space cost for the trivial time-optimal schedule and the approximated
    space-optimal schedule, respectively, for $20$ vertices.
  ]{
    Time and space costs (cf. \cref{.schedule_cost}) for the trivial time-optimal schedule
    and the approximated space-optimal schedule, for $\abs{V} = 20$ vertices. (a) shows
    the time cost $\mtc$ for the time-optimal schedule $S_{\text{trivial,time}}$ directly
    obtained from the time order induced by the Pauli frames. Up to numerical errors, the
    time cost is independent of the edge density $p_e$ and increases with the correction
    density $p_c$. (b) shows the space cost $\msc$ for the same schedule
    $S_{\text{trivial,time}}$. (c) and (d) show the according costs for the approximated
    space-optimal schedules $S_{\text{approx,space}}$, respectively. For details about the
    probabilistic approximated search , look at \cref{s"more_on_the_numerical_results}.
  }\label{f"density}
\end{figure*}

In \cref{f"density,f"nodes_main} we show some numerical results of the space and time
costs for random graphs and time orders. The results are not directly based on quantum
circuits but rather on random graphs and random time orders. The graphs are randomly
generated by uniformly creating edges with a certain density $p_e$. Instead of directly
drawing a random time order, we draw random Pauli frames and then calculate the time order
from them using the Pauli tracking library. We do this, to create a closer connection to
the underlying \ac{mbqc} scheduling problem: First, we randomly pick one vertex, which
represents a vertex/qubit in a teleportation protocol that is going to be ``measured''
(cf. \cref{f"teleportation}). Then, we randomly induce Pauli corrections on the other
vertices (which have not been ``measured'' yet), dependent on the picked vertex, with a
certain probability $p_c$. These corrections form a random Pauli frame that represents
the correction induced by the picked vertex, which would have been tracked through a
hypothetical circuit. This way, one might argue that the correction density $p_c$
resembles the spreading of the induced Pauli corrections because of entanglement.

Alternatively, to have a closer relation to quantum algorithms, we could have directly
drawn random circuits, or maybe specific circuits, and then use different \ac{mbqc}
protocols to transform them into graphs and time orders. However, this compilation is not
trivial to implement and not part of this project.

Keeping this in mind, the shown results should be viewed as qualitative indications
whether it might be worth to invest in this optimisation technique. They are definitely
not benchmarks, since sampling enough data in a stable environment would require extremely
long calculations due to the hardness of the problem. Furthermore, in an application, it
will also heavily depend on the specific quantum algorithm, the \ac{mbqc} protocol, and
the underlying hardware which puts additional constraints on the scheduling.

\Cref{f"density} shows the time cost for the trivial time-optimal schedule and the space
cost for an approximated space-optimal schedule, for a fixed number of vertices, $\abs{V}
= 20$, but different edge densities $p_e$ and correction densities $p_c$. The time cost
for the time-optimal path (\cref{f"density} (a)) is completely independent of the edge
density $p_e$; this is clear, since it is directly calculated from the time order, without
any reference to the graph. The lower the correction density $p_c$ is, the lower is the
time cost. This is because the time order has less order relations, implying less layers
in the directed time order graph, on average. Regarding the space cost for the
approximated space-optimal schedule, (\cref{f"density} (d)) we see that it increases
rather fast with the edge density $p_e$. The reason for that is that more vertices have to
be initialised when we want to measure a vertex, according to
\cref{.measurement_schedule}~\ref{i"schedule_init}, since the vertex has more neighbours.
With increasing correction density $p_c$, there are less choices to measure first a vertex
with less neighbours (\cref{.measurement_schedule}~\ref{i"schedule_order}), and therefore
the cost increases. The other two plots, \cref{f"density}~(b) and (c), show mainly
artefacts of how the algorithm is implemented: For low correction densities the space cost
of the trivial time-optimal schedule is high because this schedule greedily measures as
many vertices as possible. The time cost for the approximated space-optimal schedule
decreases with the correction density, because the algorithm always first tries the more
time-optimal schedules.

While these plots are not based on real quantum circuits, they confirm the expected
behaviour of the costs and maybe provide some qualitative intuition for when it is worth to
search for a more space-optimal schedule, if this is wanted. For example, if the edge
density is relatively high, e.g., $p_e > 0.8$, and the correction density is not too
low, e.g., $p_c > 0.6$, it might not be worth searching for a space-optimal schedule
because the space cost probably cannot be reduced significantly. On the other hand, if the
edge density is low, the space cost can probably be significantly reduced.

In \cref{f"nodes_main} we can see analogous costs, but now for a varying number of vertices
$\abs{V}$. The edge and correction densities scale via $p_e\w(\abs{V}) = p_c \w(\abs{V})=
0.5 / \sqrt{\abs{V}-1}$. This is equivalent to the vertex degree scaling with
  $\sqrt{\abs{V}-1}$. Importantly, in the asymptotic limit $\abs{V} \to \infty$, it is
  $p_e\w(\abs{V}) > \ln\w(\abs{V}) / \abs{V}$, which ensures that the graph is connected
  almost surely (cf., e.g., \cite{spencer_ten_lectures_on_probabilistic_method}, more
  specifically see \footnote{ Let \en be the number of vertices, i.e., $n = \abs{V}$. In
    \cite{spencer_ten_lectures_on_probabilistic_method}, it is proven that, if $p_e(n) =
    \w(c + \ln n) / n$, then the graph $G$ is connected with probability $\ee^{-\ee^{-c}}$
  in the asymptotic limit $n \to \infty$ (a result by Erdös and R\'enyi). }). Notably, we
  see that the space costs for the approximated space-optimal schedule are close to the
  exact space-optimal schedule, at least up to the number of vertices for which we
  performed the exact optimisation (with much faster run-times, cf. \cref{f"runtime}). For
  larger graphs, however, this deviation would of course increase. Furthermore, the space
  cost of the time-optimal schedule obtained from the exact optimisation is lower than the
  space cost of the trivial time-optimal schedule. This is because the trivial
  time-optimal schedule does not try to measure first vertices with less neighbours.

For additional information about the numerical data, e.g., which probabilistic acceptance
function is chosen or for recorded run-times, see \cref{s"more_on_the_numerical_results}.

In general, the results show that appropriate scheduling of qubit initialisation and
measurement can potentially greatly reduce the space and time cost, depending on the
system parameters (especially compared to just naively initialising the whole graph).

Finding optimal schedules, however, is a hard problem, and will require optimised
algorithms for larger scales. For general quantum circuits it can not be expected to
find the optimal scheduling efficiently. The trivial time-optimal schedule, however, can
be obtained in polynomial time from appropriately tracked Pauli frames, by greedily
measuring the qubits as allowed by the induced time order. In an application, one can
imagine starting with the trivial order and then trying to optimise it with respect to
space, until a certain space cost is reached or alternatively stop the optimisation after
a certain timeout.

\begin{figure*}[hbt!]
  \center
  \includegraphics{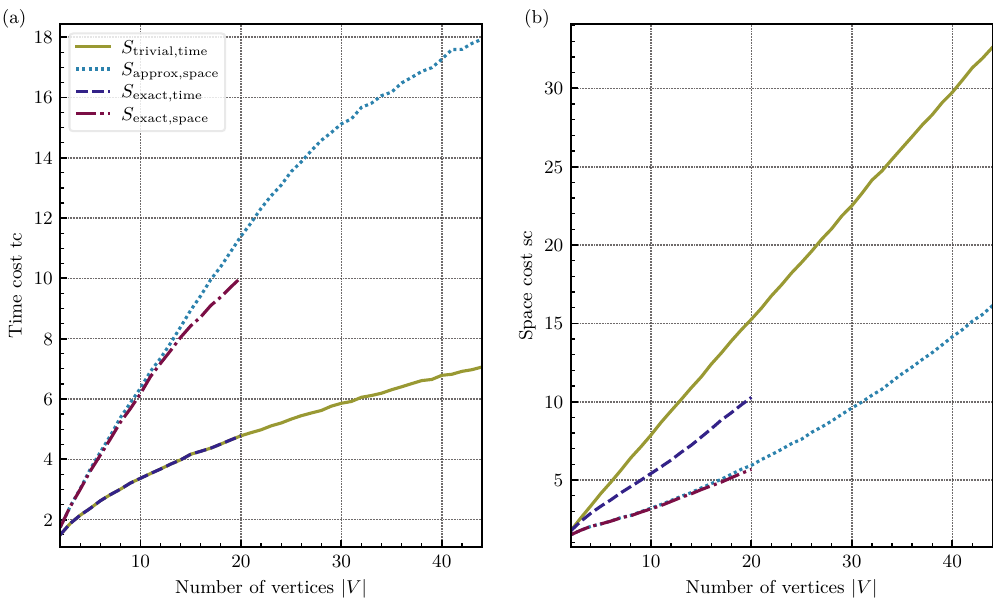}
  \caption[The time and space costs depending on the number of vertices.]{
    The time and space costs depending on the number of vertices $\abs{V}$. The edge and
    correction densities scale with $\abs{V}$: $p_e = p_c = 0.5 / \sqrt{\abs{V}-1}$. (a)
    shows the time cost $\mtc$ and (b) the space cost $\msc$ for different schedules.
    $S_{\text{trivial,time}}$ and $S_{\text{approx,space}}$ are the trivial time-optimal
    schedule and approximated space-optimal schedule as in \cref{f"density}.
    $S_{\text{exact,time}}$ and $S_{\text{exact,space}}$ are the according data of a full
    search for exact optimality, that is, $S_{\text{exact,time}}$ is a time-optimal
    schedule with the lowest possible space cost and $S_{\text{exact,space}}$ is a
    space-optimal schedule with the lowest possible time cost. These exact data points are
    only calculated for $\abs{V} \leq 20$ since the run-time becomes very long. It is
    $\mtc\w(S_{\text{exact,time}}) = \mtc\w(S_{\text{trivial,time}})$ and the costs for
    $S_{\text{approx,space}}$ are close to the costs of $S_{\text{exact,space}}$, however,
    $\msc\w(S_{\text{exact,time}})$ is fairly lower than
    $\msc\w(S_{\text{trivial,time}})$.
  }\label{f"nodes_main}
\end{figure*}

\section{Conclusion}

We presented a software library that provides the functionality to track Pauli gates
through a (Clifford) quantum circuit. The library is designed to be low level and generic,
allowing easy integration into other projects. The library is based on the isomorphism
between the Clifford group and the symplectic group. Tracking Pauli gates allows to reduce
the number of Pauli gates to the number of qubits or less. Furthermore, when tracking
Pauli corrections in \ac{mbqc}, the information can be used to calculate the strict
partial (time) order of the measurements. This allows us to perform scheduling
optimisations, which we tackled in the second part of the paper. We presented a framework
that covers the scheduling problem on an abstract level reduced to the underlying graph
and the time order. The numerical results we showed give indications on how much can be
gained with this optimisation technique.

\section{Software Availability}\label{s"software_availability}

The source code of the Pauli tracking library can be found in the
\hr{https://github.com/taeruh/pauli_tracker}[taeruh/pauli\_tracker] repository on GitHub
\cite{pauli_tracker_software}. The Rust library is published on
\hr{https://crates.io/crates/pauli_tracker}[crates.io] and the Python wrapper on
\hr{https://pypi.org/project/pauli-tracker/}[pypi.org] where you can also find links to
the documentation. For the source code of the MBQC scheduling project, see
\hr{https://github.com/taeruh/mbqc_scheduling}[taeruh/mbqc\_scheduling]
\cite{mbqc_scheduling_software}.

\section*{Acknowledgments}

We thank Samuel Elman, Thinh Le and Ryan Mann for helpful discussions. Jannis Ruh was
supported by the Sydney Quantum Academy, Sydney, NSW, Australia. The views, opinions,
and/or findings expressed are those of the author(s) and should not be interpreted as
representing the official views or policies of the Department of Defense or the U.S.
Government. This research was developed with funding from the Defense Advanced Research
Projects Agency [under the Quantum Benchmarking (QB) program under Awards No.
HR00112230007 and No. HR001121S0026 contracts].

\appendix

\section{More on the numerical results}\label{s"more_on_the_numerical_results}

\begin{figure*}[hbt!]
  \center
  \includegraphics{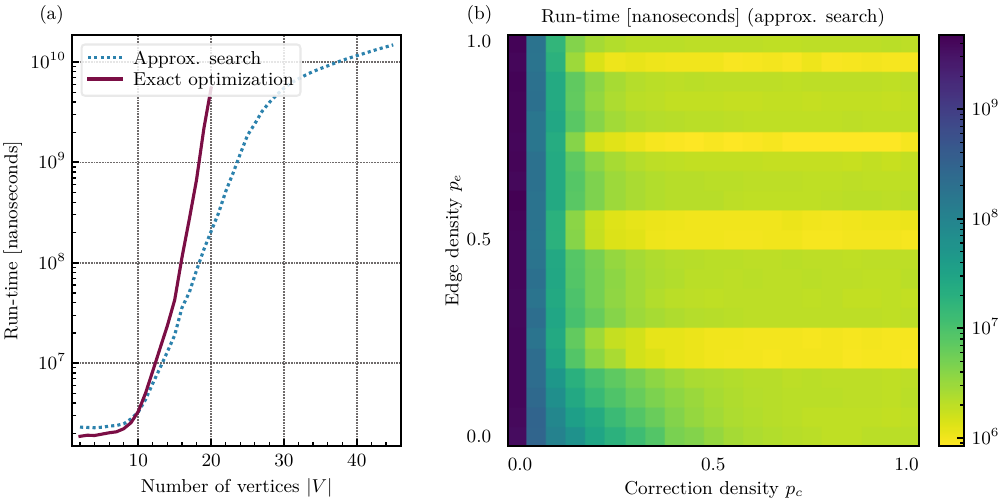}
  \caption[Recorded run-times of the calculations.]{
    Recorded run-times of the calculations. This is not a benchmark. In both plots the
    run-time is the average time it took to perform the optimisation, given a graph and a
    time order. (a) shows how the approximate algorithm performs compared to the exact
    algorithm, w.r.t. the number of vertices $\abs{V}$. Considering the logarithmic scale,
    it is magnitudes faster. Starting at approximately $30$ vertices, the timeout that
    scales quadratically with the number of vertices is reached, and the search returns
    early. (b) shows how the run-time depends on the edge density $p_e$ and the correction
    density $p_c$. It drastically increases for low correction densities. Vertically, for
    different edge densities, there are discontinuities. This is because for each selected
    edge density, the according data points, w.r.t. the correction density, were
    calculated on a different cluster node which were under different load (running the
    simulations on smaller grids locally on a single laptop does not show these
    discontinuities).
  }\label{f"runtime}
\end{figure*}

The code for the data generation and the numerical data can be found at the commits
\hr{https://github.com/taeruh/mbqc_scheduling/tree/6aaa4d8bf867075c3d90c91df87962e9e6c29377}[6aaa4d8]
and
\hr{https://github.com/taeruh/mbqc_scheduling/tree/040e28959cb87bba7b0ab64e97d2ad5b4fe1bebb}[040e289]
in
\hr{https://github.com/taeruh/mbqc_scheduling/tree/main/results}[taeruh/mbqc\_scheduling/results]
\cite{mbqc_scheduling_software}, respectively. The data capture all the information that
is needed to reproduce the data. However, an exact reproduction for the approximated
results is impossible. This is because the algorithm is multi-threaded and the tasks that
are sent to the threads communicate intermediate results between each other, which change
how the tasks continue their execution. While this is in theory deterministic, in practice
it depends on how the operating system schedules the threads and what the CPU is doing.

The probabilistic acceptance function that we used for the approximated space-optimal
schedules is given by
\begin{equation}
  p_{\mathrm{accept}}\w(\Delta, V, M) = \Theta\w(\abs{\Delta}) \abs{V}^2 \ee^{-
  \frac{\abs{V} \abs{V{\setminus}M}}{\abs{\Delta}^3 \w(\abs{M})}}
  \label{e"accept}
\end{equation}
where $V$ are the vertices in the graph, $M$ are the vertices that have been measured so far
(including the ones in the currently picked measurement set), $\Delta$ the difference
between the best memory and the required memory so far, and $\Theta$ is the Heaviside step
function (being $1$ for positive elements). The step function ensures that we only focus
on space-optimal schedules. For finding schedules somewhere between space optimal and
time optimal, i.e., a full space-time optimisation, one can choose a smoother function.
The exponential decay ensures that the run-time is sufficiently small for our experiments,
focusing on schedules with a rather large space improvement. In a real application, e.g.,
a real quantum compiler toolchain, one may choose a more optimised function, fitting to
the required needs, and optimise its parameters.

The probabilistic searches have a timeout which increases quadratically with the number of
qubits, i.e., we stop the search after $\order[\abs{V}^2]$ time and return the best
results that have been found so far. The effect of that can be seen in \cref{f"runtime}
(b), where the run-time drops down onto a quadratic curve for larger number of qubits,
however, note that the behaviour of the according curves in \cref{f"nodes_main} indicate
that the implied penalty on the costs is not too high.

The standard deviations, which we do not show for simplicity, on the plotted results are
relatively high. This is an artefact of combinatorics: For small number of vertices, the
standard deviation is just relatively high, and for larger number of vertices it becomes a
sampling issue. Because of that, we emphasise again that the results are rather
qualitative indications than quantitative benchmarks.

In \cref{f"runtime} we see the average run-time for the calculations. This is not a
benchmark, but only shows the recorded run-times for our calculations. The
calculations were performed on a cluster where the vertices were under different loads;
therefore, these plots are only qualitative indications. We see that the approximate
optimisation runs much faster than the exact search for larger graphs. In general, the
run-time heavily depends on the chosen acceptance function \cref{e"accept}. For small
correction densities, the run-time drastically increases, because the number of possible
measurement steps, per step, scales according to the ordered Bell numbers.

\bibliographystyle{quantum}

\bibliography{literature.bib}

\begin{acronym}[MBQC]
\acro{mbqc}[MBQC]{measurement-based quantum computation}
\acro{nisq}[NISQ]{noisy intermediate-scale quantum\acroextra{ (computing)}}
\acro{qec}[QEC]{quantum error correction}
\acro{simd}[SIMD]{single instruction, multiple data\acroextra{ (CPU instruction)}}
\end{acronym}

\end{document}